



\documentclass[letterpaper, 10 pt, conference]{ieeeconf}  

\IEEEoverridecommandlockouts                              
\overrideIEEEmargins

\makeatletter
\let\NAT@parse\undefined
\makeatother

\usepackage{graphicx} 
\usepackage{amsmath} 
\usepackage{amssymb}  
\usepackage{hyperref}
\usepackage{cite}
\usepackage[nameinlink]{cleveref}
\usepackage{xcolor}
\usepackage{algorithm}
\usepackage{algpseudocode}
\usepackage{mathtools}


\newtheorem{theorem}{Theorem}
\newtheorem{remark}{Remark}

\newtheorem{proposition}{Proposition}
\newtheorem{definition}{Definition}


\newcommand{\ie}{\textit{i.e.}}
\newcommand{\eg}{\textit{e.g.}}

\newcommand{\given}{\mid}

\newcommand{\range}[1]{\mathcal{R}(#1)}

\newcommand{\reals}{\mathbb{R}}

\newcommand{\SO}[1]{SO(#1)}

\newcommand{\SE}[2]{SE_{#2}(#1)}

\newcommand{\identity}[1]{I_{#1}}
\newcommand{\zeros}[2]{0_{#1 \times #2}}


\title{\LARGE \bf
Invariant Kalman Filtering with Noise-Free Pseudo-Measurements}

\author{Sven Goffin, Silvère Bonnabel, Olivier Brüls, and Pierre Sacré
    \thanks{Sven Goffin is a FRIA grantee of the Fonds de la Recherche Scientifique - FNRS.}
    \thanks{S.~Goffin and P.~Sacré are with the Department of Electrical Engineering and Computer Science, University of Liège, Belgium (sven.goffin@uliege.be, p.sacre@uliege.be).}
    \thanks{O.~Brüls is with the Department of Aerospace and Mechanical Engineering, University of Liège, Belgium (o.bruls@uliege.be).}
    \thanks{S.~Bonnabel is with MINES ParisTech, PSL Research University, France (silvere.bonnabel@mines-paristech.fr).}
}


\begin{document}

\maketitle
\thispagestyle{empty}
\pagestyle{empty}


\begin{abstract}
    In this paper, we focus on developing an Invariant Extended Kalman Filter (IEKF) for extended pose estimation for a noisy system with state equality constraints. We treat those constraints as noise-free pseudo-measurements. To this aim, we provide a formula for the Kalman gain in the limit of noise-free measurements and rank-deficient covariance matrix. We relate the constraints to group-theoretic properties and study the behavior of the IEKF in the presence of such noise-free measurements.
    We illustrate this perspective on the estimation of the motion of the load of an overhead crane, when a wireless inertial measurement unit is mounted on the hook.
\end{abstract}


\section{Introduction}
Since its introduction in the 1960s, the Extended Kalman Filter (EKF) is probably the most used filter in the industry~\cite{kalman1960new}. Despite its use in many real world applications, it relies on strong assumptions that are rarely met in practice. Moreover, its development is solely based on a probabilistic description of a dynamical and a measurement model, disregarding the geometrical structure of the considered problem. The latter concern gave birth to a proper field  that focuses on geometric filtering methods. This research direction provided improved solutions for many useful tasks, like 
Simultaneous Localization And Mapping (SLAM)~\cite{mahony2017geometric, barrau2018invariant, van2019geometric, mahony2021homogeneous}, 
inertial navigation and localization~\cite{barrau2016invariant,hartley2020contact,barrau2022geometry}, or 
attitude and pose estimation \cite{bonnabel2005invariant, mahony2008nonlinear, cohen2020navigation}. 
The invariance and equivariance properties of frameworks developed in \cite{barrau2015non,barrau2016invariant, mahony2021equivariant, barrau2022geometry} are reminiscent of those of linear systems in many regards. 

In various contexts, some additional (deterministic) information further constrain the state to belong to a subspace of the state space. Incorporating equality constraints in probabilistic filtering constitutes a significant challenge that has led to the realm of state constrained extended Kalman filtering, see~\cite{simon2002kalman}. 
In this paper, we consider nonlinear constraints, dictated by kinematic of mechanical relations, which arise for instance when estimating the configuration of a multi-body mechanical system whose parts are rigidly linked and equipped with one or several Inertial Measurement Units (IMUs). Especially, we consider the equations associated to an IMU that may move freely in space and assume it is mounted at the end of a rigid ``link" whose other end's location is known, or fixed. This may be the case when mounting an IMU on a robotic arm, or on the hook of a crane (a pendulum), as in ~\cite{rauscher2018motion}. In those cases, the known location of the other end of the link provides an (nonlinear) equality constraint that may help estimating the state of the IMU. 

One may then resort to state constrained Kalman filtering, see ~\cite{simon2002kalman}. In the nonlinear case, such methods based on (re)projections are a means of enforcing the constraint. However, they come with no guarantee of consistency whatsoever with the estimation problem at hand, and a brutal (re)projection onto the constraint contains a degree of arbitrariness.  
Herein, we advocate that such nonlinear constraints may rather be incorporated in the extended Kalman filter and its variants through pseudo-measurements without noise, as they constitute information known with certainty. This technique is perfectly justified in the linear case, and allows for optimal fusion of noisy sensors and deterministic side information. This poses two challenges, though. First, the matrix that must be inverted in the Kalman filter may become rank-deficient, see \eg,~\cite{gurumoorthy2017rank}. Second, in the nonlinear case, there is no guarantee the constraint is respected after the pseudo-measurement has been incorporated, contrary to the linear case.  

We derive a formulation of the Kalman gain that accommodates the rank deficiency issues that may stem from the noise-free setting. Furthermore, we cast the problem into the framework of the Invariant Extended Kalman Filter (IEKF) by embedding the state space into a Lie group. Although  \cite{barrau2019extended} provides us with both an inspiration and a mathematical framework, we consider different problems:  \cite{barrau2019extended} considers the limit case of a  noise-free (deterministic) dynamics with noisy measurements, along with an initial condition lying in a constrained set, so that deterministic information is propagated. Herein, we consider arbitrary noisy dynamics (and possibly noisy measurements) starting from an unrestricted initial configuration, and investigate how to incorporate noise-free measurements corresponding to deterministic side information. This comes with different challenges and allows for addressing different problems. Recently, \cite{chauchat2021robust} proposed to linearly constrain the Riccati equation for robust invariant filtering, which is a wholly different problem. 

The paper is structured as follows. 
Section \ref{sec:mot_prob} specifies the considered class of estimation problems,  provides two motivating examples of engineering interest, and  highlights the properties one shall pursue and the challenges that are raised in both the linear and nonlinear cases.
Section \ref{sec:rank_deficient} derives a formulation for the Kalman gain in the limit of noise-free measurements in the linear case and shows its corresponding properties.
Section \ref{sec:nonlinear_nf_update} extends the use of this gain formulation in the framework of invariant Kalman filtering, and proposes an alternative update for the IEKF in the presence of noise-free measurements. Finally, 
Section \ref{sec:simu} illustrates the performance of the proposed approach for the estimation of the extended pose of the hook of a crane.


\section{Motivating problem} \label{sec:mot_prob}

Since the advent of smartphones, the use of cheap IMUs has increased significantly. In this paper, we consider the problem of fusing the IMU measurements with deterministic constraints, treated as noise-free pseudo-measurements. 

\subsection{Mathematical formulation}

Free from any prior knowledge about the motion, the general dynamics of an IMU write (\eg, \cite{barrau2016invariant})
\begin{subequations} \label{eq:sys_dyn}
\begin{align}
    R_{k+1} &
    = R_k\exp((\omega_k + w^{\omega}_k)_\times\, dt),\label{eq:rot_dyn}\\
    v_{k+1} &= v_k + \left(R_k (a_k + w^{a}_k) + g\right)\,dt,\label{eq:vel_dyn}\\
    p_{k+1} &= p_k + v_k\,dt,\label{eq:pos_dyn}
\end{align}
\end{subequations}
where we neglected sensors' biases. The matrix $R_k \in \SO{3}$ is the rotation matrix between the IMU and world frames, and the vectors $v_k,\,p_k \in \reals^3$ are respectively the IMU velocity and position vectors expressed in the world frame at time instant $k$. 
The angular velocity $\omega_k$ and linear acceleration $a_k$ output by the IMU are used as inputs to the dynamical system. 
The vector $g$ denotes the gravity vector expressed in the world frame. 
The map $\exp(\cdot)$ denotes the matrix exponential map and the notation $(b)_\times$ denotes the skew-symmetric matrix associated with cross product with vector $b\in\reals^3$. 

Even if no assumption is made on the dynamics, the structural and mechanical properties of the body to which the IMU is attached may  constrain the IMU pose. 
When dealing with a rigid body, this comes as noise-free equality constraints of the form
\begin{equation}
    R_k r_k + \alpha_k v_k + \beta_k p_k = y_k,
    \label{eq:gen_cst}
\end{equation}
where $r_k,\, y_k \in \reals^3$ and $\alpha_k,\, \beta_k \in \reals$. 

\textbf{Considered problem} We aim at devising a meaningful (Kalman) filter to estimate the state of the noisy system \eqref{eq:sys_dyn} from noise-free measurements \eqref{eq:gen_cst} in real time.

\subsection{Application examples}
Estimating the pose of an IMU under constraints like~\eqref{eq:gen_cst} applies to a wide range of applications. 
We provide two concrete applications for this problem, which is nonlinear, owing  to the state variable $R_k$ being a rotation matrix.

\subsubsection{Crane state estimation}
As a first example, consider a crane on a construction site (see \autoref{fig:crane}). It is technically very feasible nowadays to mount an IMU on the hook that transmits its sensor readings \cite{rauscher2018motion}. Estimating the position and velocity of the load from the IMU measurements may open the door to new automation capabilities since it allows for feedback. However, the underlying dynamical model is unknown, as the hook does not necessarily follow a simple pendulum motion, because of external forces like friction and wind. The IMU dynamics~\eqref{eq:sys_dyn}, though, consist of kinematic relations which are satisfied at all times, whatever the motion. Moreover, when the cable is hanging, the distance between the hook and the cable hang-up point is the cable length~$l_k$, which is very accurately measured in modern cranes for safety reasons, see, \eg, \cite{bonnabel2020industrial}. This provides a constraint of the form of \eqref{eq:gen_cst}
with $r_k=(0 , 0 , -l_k)$, $\alpha_k=0$, $\beta_k = 1$, and $y_k = \zeros{3}{1}$. 

\begin{figure}[htbp]
    \centering
    \includegraphics[width=0.75\columnwidth]{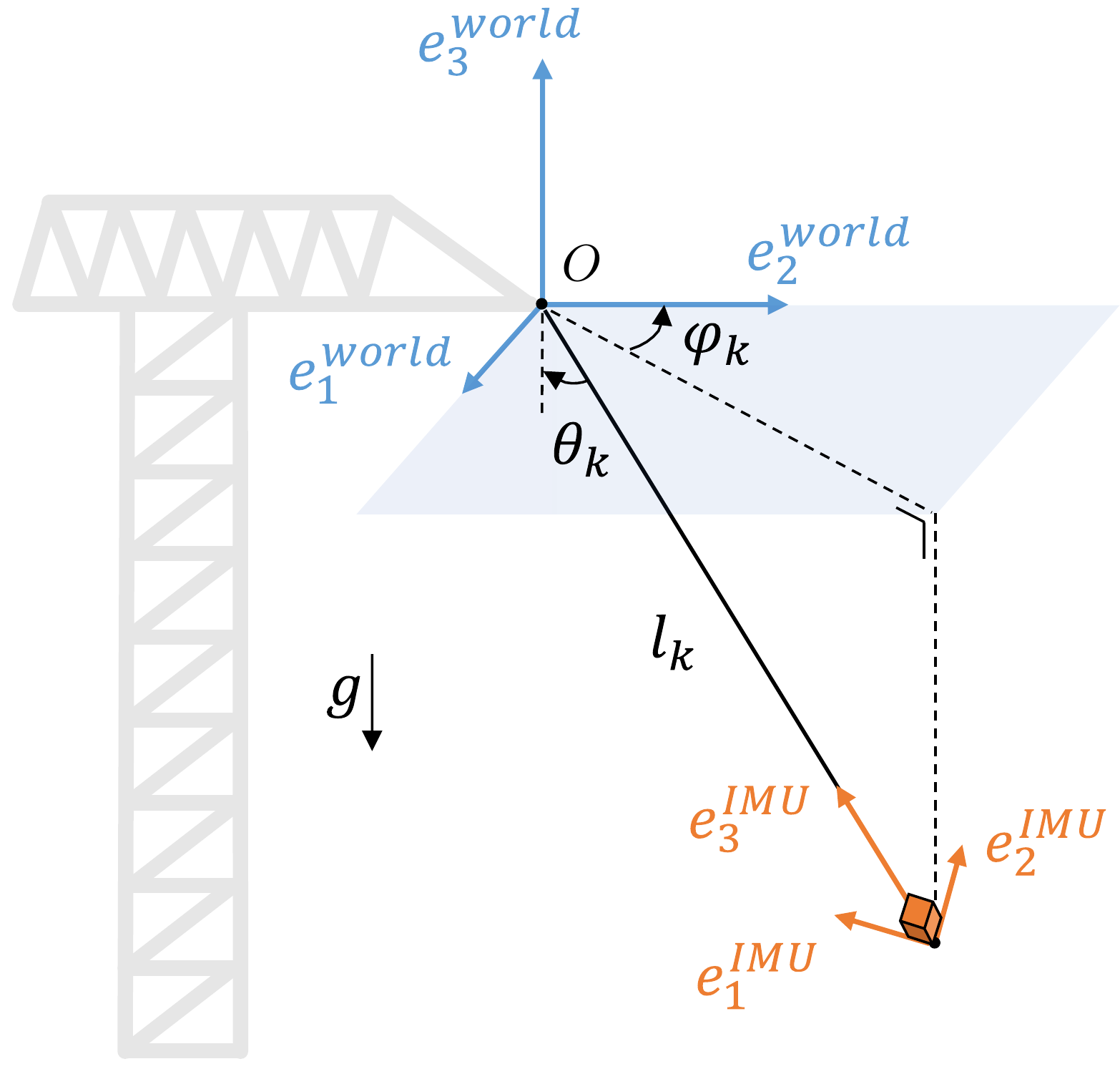}
    \caption{Crane with a wireless IMU mounted on its hook.}
    \label{fig:crane}
\end{figure}

\subsubsection{Robotic arm state estimation}
The field of robotics also uses IMUs extensively. As a second example, robotic arms are usually equipped with IMUs in order to precisely estimate the pose of their end effector. Although the dynamics of the arm cannot be predicted in advance, kinematic constraints enforce that the segments of the arm are attached to each other at the corresponding joints, providing constraints of the form~\eqref{eq:gen_cst}. This is what is exploited in \cite{hartley2020contact} in the form of noisy measurements.

\subsection{Kalman filtering with noise-free pseudo-measurements}\label{subsec:desirable}
Our problem of estimating noisy systems while treating noise-free equality constraints like \eqref{eq:gen_cst} fits into the framework of probabilistic filtering. 
In this framework, the belief at time index $k$, taking into account measurements up to time $j$, consists of a multivariate Normal distribution parameterized by a mean $\hat{x}_{k \given j} \in \reals^n$ (the state estimate) and a covariance matrix $P_{k \given j} \in \reals^{n \times n}$ (the associated uncertainty). 
We advocate indeed that the constraints can be incorporated in a probabilistic filter by using $y_k$ as a noise-free pseudo-measurement $y_k=h(x_k)$, letting 
 \begin{equation}
     h(x_k) = h(R_k,v_k,p_k)=R_k r_k + \alpha_k v_k + \beta_k p_k.
     \label{eq:meas_fct}
 \end{equation}
From the recent theory of two-frames \cite{barrau2022geometry}, it can readily be checked \eqref{eq:meas_fct} defines an output that fits into the framework of invariant filtering, and so do the IMU equations \eqref{eq:sys_dyn}.

Noise-free pseudo-measurements convey information that is known for certain. Hence there are two desirable properties one would expect from a (Normal) probabilistic filter.
\begin{itemize}
    \item 
    \textit{Property 1}: 
    After a noise-free pseudo-measurement $y_k$, the state is known to belong to the subset 
    $$\{x \in \reals^n \mid h(x) = y_k\}$$ 
    owing to noise-free measurement. 
    This should translate into $h(\hat{x}_{k \given k}) = y_k$.
    \item  
    \textit{Property 2}: 
    After a noise-free pseudo-measurement $y_k$, the error covariance matrix $P_{k \given k}$ should encode some deterministic information: null variance is to be expected in the perfectly observed directions. Mathematically, it should translate into $H_kP_{k \given k}H_k^T=\zeros{m}{m}$ with $H_k \in \reals^{m\times n}$ the Jacobian matrix of $h$ at the estimate~$\hat{x}_{k \given k}$.
    
\end{itemize}

\subsection{The linear case}
In the linear case, the Kalman filter---whose equations are recalled in Appendix~\ref{app:KF}---is optimal. Thus, we can expect both properties above to be satisfied. We have
\begin{equation}
H_k\hat{x}_{k \given k}=H_kK_ky_k+(H_k-H_kK_kH_k)\hat{x}_{k \given k-1} \label{eq:linearoutput}
\end{equation}
and 
\begin{equation}
    H_kP_{k \given k}H_k^T=(H_k-H_kK_kH_k)P_{k \given k-1}H_k^T.
\end{equation}
Both desirable properties above are satisfied indeed, as soon as $H_kK_k=\identity{m}$, since it ensures that $H_k-H_kK_kH_k=\zeros{m}{n}$. Recalling that $K_k=P_{k \given k-1}H_k^T(H_kP_{k \given k-1}H_k^T)^{-1}$ whenever measurements are noise-free, the desired relation holds if $H_kP_{k \given k-1}H_k^T$ is invertible. 
If it is not, the Kalman gain is not even defined. 
This may occur when a noise-free measurement is made for the second time: the rank of $H_kP_{k \given k-1}H_k^T$ has dropped and the matrix cannot be inverted anymore. 
This may also occur when performing a subsequent noise-free measurement along directions that overlap with the span of $H_k$ so that the matrix to be inverted is not full-rank. 

\subsection{The nonlinear case}

In the nonlinear case, that is, when using an Extended Kalman Filter (EKF), problems accumulate. Indeed, even if $H_kP_{k \given k-1}H_k^T$ is invertible leading to $H_kK_k=\identity{m}$, there is no reason one should have $h(\hat{x}_{k \given k})=y_k$, as we see that \eqref{eq:linearoutput} relies on $h$  being linear, which is not the case in our problem, see \eqref{eq:meas_fct}, as $R_k$ is a rotation matrix that cannot be treated linearly. Moreover, the Jacobian matrix $H_k$ depends on the linearization point in EKF design, so that the desired relation $H_kP_{k \given k}H_k^T=\zeros{m}{m}$ is not properly defined: the Jacobian matrices at different linearization points span different directions, making the condition unclear, in the sense that the dispersion encoded by $P_{k \given k}$ has no clear relation to the subset defined by $h(x)=y_k$.

In the sequel, we address the issues that arise both in the linear and nonlinear case. 


\section{Handling rank deficiency in the linear case} \label{sec:rank_deficient}

Let us consider the discrete-time linear system
\begin{subequations}
\begin{align}
    x_{k+1} &= F_{k} x_{k} + B_k u_{k}+ w_k,\\
    y_{k} &= H_kx_{k}.\label{eq:noisefree:output:linear}
\end{align}
\end{subequations}
where $u_k\in \reals^b$ is the system input, and where $F_k \in \reals^{n\times n}$, $B_k\in \reals^{n\times b}$, and $H_k \in \reals^{m\times n}$. 
 When the measurement~$y_k$ is noise-free, as in \eqref{eq:noisefree:output:linear}, the Kalman gain \eqref{eq:Kalman:gain} is not defined if $H_kP_{k \given k-1}H_k^T$ is singular, since the measurement noise covariance $N_k$ is null. This is logical as then the same measurements does not bring additional information. However, in the noisy setting we may need to update the state regularly to combat the dispersion due to noise. Then $H_kP_{k \given k-1}H_k^T$ may be nearly singular, and we may run into numerical issues. We now address this problem.

 \subsection{Solving for the Kalman gain in the noise-free limit}
 
 The following result proves the limit of the Kalman gain when $N_k\to 0$ is well defined and provides its expression.
\begin{theorem}
    Assume $P_{k \given k-1}$  to be of rank $l\leq n$ and write it as   $P_{k \given k-1} = L_kL_k^T$ where $L_k \in \reals^{n\times l}$ has linearly independent columns. The following Kalman gain  
    \begin{equation}
        K_k := L_k (H_k L_k)^\dagger,\label{eq:limit;gain}
    \end{equation}
    is the limit of $P_{k \given k-1}H_k^T(H_kP_{k \given k-1}H_k^T+N_k)^{-1}$ as the measurement noise covariance matrix $N_k$ shrinks to $\zeros{m}{m}$, where $A^\dagger$ denotes the Moore-Penrose pseudo-inverse of $A$. 
    \label{theo:nf_gain}
\end{theorem}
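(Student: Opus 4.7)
My plan is to reduce the statement to the classical Tikhonov characterization of the Moore--Penrose pseudo-inverse, $A^\dagger = \lim_{\epsilon \to 0^+} A^T(AA^T + \epsilon \identity{m})^{-1}$, applied to $A = H_k L_k$. First, I would exploit the factorization $P_{k \given k-1} = L_k L_k^T$ to pull $L_k$ out of the gain formula. Setting $M_k := H_k L_k \in \reals^{m \times l}$, direct computation yields
\begin{equation*}
P_{k \given k-1} H_k^T = L_k L_k^T H_k^T = L_k M_k^T
\quad\text{and}\quad
H_k P_{k \given k-1} H_k^T = M_k M_k^T,
\end{equation*}
so that the noisy gain collapses into
\begin{equation*}
P_{k \given k-1} H_k^T \bigl(H_k P_{k \given k-1} H_k^T + N_k\bigr)^{-1} = L_k\, M_k^T (M_k M_k^T + N_k)^{-1}.
\end{equation*}
Because $L_k$ is independent of $N_k$, the whole problem reduces to showing that $M_k^T(M_k M_k^T + N_k)^{-1}$ tends to $M_k^\dagger$ in the noise-free limit.

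Second, I would establish this pseudo-inverse limit via the singular value decomposition $M_k = U \Sigma V^T$, with nonzero singular values $\sigma_1, \ldots, \sigma_r$. Taking $N_k = \epsilon \identity{m}$ with $\epsilon \to 0^+$ and using orthogonality of $U$, one obtains
\begin{equation*}
M_k^T (M_k M_k^T + \epsilon \identity{m})^{-1} = V\, \Sigma^T (\Sigma \Sigma^T + \epsilon \identity{m})^{-1}\, U^T.
\end{equation*}
The central factor is diagonal with entries $\sigma_i/(\sigma_i^2 + \epsilon)$ for $i = 1, \ldots, r$ and $0$ on the remaining rows; these converge respectively to $1/\sigma_i$ and $0$, so the whole product converges to $V \Sigma^\dagger U^T = M_k^\dagger$. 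Combining with the first step delivers $L_k M_k^\dagger = L_k (H_k L_k)^\dagger$, which is exactly the proposed gain~\eqref{eq:limit;gain}.

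The main conceptual point requiring care, rather than a computational obstacle, is the precise meaning of the limit $N_k \to \zeros{m}{m}$: the clean convergence to $M_k^\dagger$ holds for the canonical isotropic shrinkage $N_k = \epsilon \identity{m}$, the Tikhonov regime under which the pseudo-inverse is classically recovered. Once this convention is fixed, the argument amounts to two lines of block algebra on the factorization $P_{k \given k-1} = L_k L_k^T$ followed by a routine SVD computation, so I do not expect any substantial further difficulty.
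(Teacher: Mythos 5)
Your proof follows essentially the same route as the paper's: factor out $L_k$ using $P_{k\given k-1}=L_kL_k^T$, reduce to the Tikhonov limit $A^\dagger=\lim_{\epsilon\to 0^+}A^T(AA^T+\epsilon \identity{m})^{-1}$ applied to $A=H_kL_k$, and conclude. The only differences are cosmetic or minor: you prove the Tikhonov limit explicitly via the SVD, whereas the paper simply cites it from Golub and Van Loan, so your version is more self-contained on that point. On the other hand, you deliberately restrict the limit to the isotropic shrinkage $N_k=\epsilon\identity{m}$ and declare that to be the meaning of $N_k\to\zeros{m}{m}$; the theorem as stated concerns an arbitrary positive-definite $N_k$ tending to zero, and the paper addresses this (admittedly only in a sketchy sentence) by bounding $N_k$ between $\delta_{\min}\identity{m}$ and $\delta_{\max}\identity{m}$ and invoking a squeeze argument in the Loewner order. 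If you want your proof to cover the full statement, you should add such an argument (or another continuity argument in $N_k$) rather than fixing the convention; otherwise your write-up proves a slightly weaker claim than the theorem asserts.
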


Note that the given expression is valid whatever the invertibility of the innovation covariance $H_kP_{k \given k-1}H_k^T$ as $(H_kL_k)^\dagger$ always exists. 

\begin{proof}
Given any matrix $A$, its Moore-Penrose pseudo-inverse can be expressed as the limit
\begin{equation}
    A^\dagger = \lim_{\delta \rightarrow 0} A^T (AA^T + \delta \, I)^{-1}.
    \label{lem:lim_pinv}
\end{equation}
The limit is finite and exists even if $AA^T$ is singular \cite{golub2013matrix}. We pose $A_k=H_kL_k$ and assume for now that $N_k = \delta \, \identity{m}$. 
\begin{align}
    \lim_{\delta \rightarrow 0} K_k &= \lim_{\delta \rightarrow 0} P_{k \given k-1}H_k^T(H_kP_{k \given k-1}H_k^T + \delta \, \identity{m})^{-1},\\
    &= \lim_{\delta \rightarrow 0} L_kA_k^T(A_kA_k^T + \delta \, \identity{m})^{-1},\\
    &= L_k \,\lim_{\delta \rightarrow 0} A_k^T\left(A_kA_k^T + \delta \, \identity{m}\right)^{-1},\\
    &\stackrel{\eqref{lem:lim_pinv}}{=} L_kA_k^\dagger,\\
    &= L_k(H_kL_k)^\dagger.
\end{align} When $N_k$ is not of the form $\delta \, \identity{m}$, we may upper and lower bound its eigenvalues by matrices of the desired form, and use the squeeze theorem (in the sense of positive semidefinite matrix inequalities), using the fact matrices $A_kA_k^T$ and $N_k$ are symmetric. 
\end{proof}

\subsection{Properties of the Riccati update (Property 2)}

When a noise-free measurement \eqref{eq:noisefree:output:linear} has been made, it is logical that the corresponding variance drops to zero, that is, $H_kP_{k \given k}H_k^T=\zeros{m}{m}$, along the lines of Property 2. We now show this is the case with the obtained limit gain. 

\begin{theorem}
    Assume $P_{k \given k-1}$  to be of rank $l\leq n$ and write it as $P_{k \given k-1} = L_kL_k^T$ where $L_k \in \reals^{n\times l}$ has linearly independent columns. Consider the limit gain \eqref{eq:limit;gain} and the corresponding Riccati update \eqref{eq:ric:up}. Then we have $H_kP_{k \given k}H_k^T=0$ or equivalently $H_kP_{k \given k}=0$.\label{thm22}
\end{theorem}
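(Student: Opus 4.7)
My plan is a direct substitution of the limit gain \eqref{eq:limit;gain} into the Riccati update, followed by one application of the defining identity $AA^\dagger A = A$ of the Moore-Penrose pseudoinverse. I would start from the noise-free update in its simple form $P_{k \given k} = (\identity{n} - K_k H_k) P_{k \given k-1}$, replace $P_{k \given k-1}$ by $L_k L_k^T$, and left-multiply by $H_k$ to get
\[
H_k P_{k \given k} \;=\; (H_k - H_k K_k H_k)\, L_k L_k^T.
\]
Setting $A_k := H_k L_k$, the middle product reads $H_k K_k H_k L_k = A_k A_k^\dagger A_k$, which equals $A_k$ by the pseudoinverse identity. Therefore $H_k P_{k \given k} = (A_k - A_k) L_k^T = \zeros{m}{n}$, and $H_k P_{k \given k} H_k^T = \zeros{m}{m}$ follows trivially by right-multiplying by $H_k^T$.

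For the ``or equivalently'' part of the statement, I would observe that for any symmetric positive semidefinite $P_{k \given k}$ the two identities coincide: writing $P_{k \given k} = M M^T$, the vanishing of $H_k P_{k \given k} H_k^T$ forces $\|M^T H_k^T v\|^2 = 0$ for every $v$, hence $M^T H_k^T = 0$ and consequently $P_{k \given k} H_k^T = 0$; symmetry then transposes this into $H_k P_{k \given k} = 0$. Since the explicit computation above already produces the stronger form, the equivalence is automatic and no separate argument is needed.

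I expect the only potential obstacle to be bookkeeping rather than anything conceptual: making sure the rank-$l$ factorization $P_{k \given k-1} = L_k L_k^T$ passes cleanly through the update and that the shapes align when $A_k$ itself may fail to have full row or column rank (which is precisely why we need the pseudoinverse rather than an inverse). If the paper chooses to invoke the Joseph form $(\identity{n} - K_k H_k) P_{k \given k-1} (\identity{n} - K_k H_k)^T$ for numerical symmetry, the argument is unchanged: the intermediate identity $H_k (\identity{n} - K_k H_k) L_k = \zeros{m}{l}$ established along the way annihilates the left factor of the sandwich outright, so no further manipulation of the right factor is required.
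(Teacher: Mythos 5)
Your proof is correct and is essentially identical to the paper's: both substitute the limit gain into $H_k(I-K_kH_k)P_{k\given k-1}$, factor $P_{k\given k-1}=L_kL_k^T$, and conclude via the pseudoinverse identity $A_kA_k^\dagger A_k=A_k$ with $A_k=H_kL_k$. Your additional remarks on the equivalence of the two vanishing conditions and on the Joseph form are fine but not needed.
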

\begin{proof}
    We have 
    \begin{align}
        & H_k(I-K_kH_k)P_{k \given k-1} \\
        & \quad = H_k(I-L_k(H_kL_k)^\dagger H_k)P_{k \given k-1} \\
        & \quad = (H_kL_k-H_kL_k(H_kL_k)^\dagger H_kL_k)L_k^T\\
        & \quad =\zeros{m}{n},  
    \end{align}
     where we used the property of the pseudo-inverse that that $AA^\dagger A=A.$  This proves the result.
\end{proof}


\section{Invariant Kalman filtering with noise-free pseudo-measurements} \label{sec:nonlinear_nf_update}
Consider the discrete-time nonlinear system \eqref{eq:sys_dyn} along with nonlinear output map \eqref{eq:meas_fct}. The EKF first linearizes the system before applying the machinery of linear Kalman filtering. Gain formulation \eqref{eq:limit;gain} may thus be applied in the nonlinear case too. Nevertheless, the dependence of Jacobian matrix $H_k = \frac{\partial h(x)}{\partial x}$ on the linearization point makes the interpretation obscure: the update ensures Property 2 at the linearization point, which is not the true state. The invariant filtering framework brings clarification in this regard.

\subsection{The Invariant EKF for the considered problem}
Let us modify the representation to cast the problem within the invariant framework of  \cite{barrau2018invariant}. We embed the state into the matrix Lie group $\SE{3}{2}$, namely the group of extended poses \cite{barrau2016invariant, brossard2021associating}. The state turns into
\begin{equation}
    \chi_k = \begin{bmatrix}
    R_k & v_k & p_k\\
    \zeros{1}{3} & 1 & 0\\
    \zeros{1}{3} & 0 & 1
    \end{bmatrix}.
\end{equation}
Its dynamics satisfies Equation~(11) of \cite{barrau2018invariant} and is said to be group affine, insuring it fits the framework of the Invariant Extended Kalman Filter (IEKF). In this embedding, the measurement function may write as
\begin{align}
    h(\chi_k) &= \chi_k d_k \label{eq:iekf_meas}
    = 
    \begin{bmatrix}
        R_k r_k + \alpha_k v_k + \beta_k p_k\\
        \alpha_k\\
        \beta_k
    \end{bmatrix},
\end{align}
where $d_k = (r_k , \alpha_k , \beta_k)$, which fits into the theory of \cite{barrau2016invariant} and corresponds to left-invariant outputs (observations in the fixed frame). Along the lines of \cite{barrau2016invariant} we define the innovation as $z_k=\hat{\chi}_{k \given k-1}^{-1}y_k - d_k$. 
The invariant framework introduces the nonlinear and linearized errors $\eta_{k  \given  j}\in \SE{3}{2}$ and $\xi_{k  \given  j}\in \reals^9$ defined by
\begin{align}
    \eta_{k  \given  j} & = \hat{\chi}_{k  \given  j}^{-1}\chi_k = \exp(\xi_{k  \given  j}),
\end{align}
where $\exp(\cdot)$ now denotes the Lie exponential map
of group $\SE{3}{2}$. To derive the Jacobian matrices, the fastest way is to readily retrieve them from the formulas of the recent two-frames theory \cite{barrau2022geometry}. In this context the left-invariant error is $\left(\hat{R}_{k  \given  j}^{-1}R_k, \hat{R}_{k  \given  j}^{-1}(v_k-\hat{v}_{k  \given  j}),\hat{R}_{k  \given  j}^{-1}(p_k-\hat{p})\right)$, 
the innovation is $z_k=(\hat{R}_{k  \given  k-1}^{-1}R_k - I)r_k+\alpha_k \hat{R}_{k  \given  k-1}^{-1}(v_k-\hat{v}_{k  \given  k-1})+\beta_k \hat{R}_{k  \given  k-1}^{-1}(p_k-\hat{p}_{k  \given  k-1})$ and the theory yields, see Proposition 13 of \cite{barrau2022geometry},
\begin{align}
    H_k = 
    \begin{bmatrix}
        -( r_k)_\times &\alpha_k \identity{3} &   \beta_k \identity{3}    
    \end{bmatrix},\label{Jac:lie}
\end{align}
and denoting $\Omega_k:=\exp(dt(\omega_k)_\times)$, we have as in \cite{barrau2022geometry}
\begin{align}
    F_k = 
    \begin{bmatrix}
        \Omega_k^{-1}&0&0\\
        -dt\,\Omega_k^{-1}(a_k)_\times& \Omega_k^{-1}&0 \\
        0& dt \,\Omega_k^{-1}&  \Omega_k^{-1}
    \end{bmatrix}.
\end{align}
We see we recover the important property of IEKF that the Jacobian matrices are state-independent. 
The IEKF updates its estimate according to
\begin{equation}
    \hat{\chi}_{k \given k} = \hat{\chi}_{k \given k-1} \exp(K_kz_k).
    \label{eq:iekf_upd}
\end{equation}

The following sections presents how the invariant framework naturally ensures Property~2 and what can be done to satisfy at best Property~1. Since the problems tackled here arise from the noise-free nature of the considered pseudo-measurements, our work constitutes already a contribution in its own right compared to \cite{barrau2018invariant}, as they only consider noisy measurements.

\subsection{What about desirable Property 2?}

In the formalism of invariant filtering, the constraint \eqref{eq:meas_fct}
writes $\chi_k d_k=y_k$. This noise-free pseudo-measurement informs us that the state belongs to the subset
\begin{equation}
        \mathcal{H} = \{\chi \in \SE{3}{2} \mid \chi d_k = y_k \, \}.
\end{equation}

We first recall how the IEKF encodes uncertainty, see \cite{barrau2018invariant}.
\begin{definition}
The IEKF belief at time instant $k$, accounting for  measurements up to time $j$, for the true state $\chi_k$ is a concentrated Gaussian distribution on the Lie group \cite{bourmaud2015continuous}
\begin{align}
    \chi_{k}=\hat\chi_{k \given j}\exp(\xi_{k \given j}),\quad \xi_{k \given j}\sim\mathcal N(\zeros{n}{1},P_{k \given j}),\label{eq:uncertainty}
\end{align}
with $\hat\chi$ the ``noise-free" mean estimate, and $P$ the covariance matrix encoding the statistical dispersion around the mean. \end{definition}
As a result, for the belief to be consistent with the set $\mathcal{H}$, we would ideally like the entire updated distribution to lie within $\mathcal{H}$. It turns out to be the case indeed, as long as the updated mean is in the right subspace of course. This proves the structure of the invariant filtering framework is consistent with the physical problem indeed. 
 
\begin{theorem}
    Consider the IEKF's covariance matrix $P_{k \given k}$ after updating its state via observation \eqref{eq:meas_fct}. Then if the updated estimate $\hat{\chi}_{k \given k}$ satisfies $\hat{\chi}_{k \given k}\in\mathcal H$, all the probabilistic dispersion encoded by the updated covariance matrix according to our belief model \eqref{eq:uncertainty} lies within $\mathcal H$, that is, for all random $
        \xi\in\mathcal N(0,P_{k \given k})$, we have $  \hat{\chi}_{k \given k}\exp(\xi)\in \mathcal H $.  
\end{theorem}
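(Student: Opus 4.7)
The plan is to reduce the claim to a deterministic pointwise statement and then resolve it using Lie-theoretic structure. First, Theorem~\ref{thm22} gives $H_k P_{k \given k} H_k^T = \zeros{m}{m}$, hence $H_k P_{k \given k} = \zeros{m}{n}$. Consequently, for $\xi \sim \mathcal{N}(\zeros{n}{1}, P_{k \given k})$, the vector $H_k \xi$ is Gaussian with zero mean and zero covariance, so $H_k \xi = 0$ almost surely. Combined with the hypothesis $\hat{\chi}_{k \given k} d_k = y_k$, it suffices to prove the pointwise implication: whenever $\xi$ satisfies $H_k \xi = 0$, one has $\exp(\xi) d_k = d_k$, so that $\hat{\chi}_{k \given k}\exp(\xi) d_k = y_k$, i.e., $\hat{\chi}_{k \given k}\exp(\xi) \in \mathcal{H}$.

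For this pointwise implication, the cleanest route is to introduce the stabilizer subgroup $\mathcal{H}_0 := \{\eta \in \SE{3}{2} : \eta d_k = d_k\}$. It is closed under product and inversion and closed as a subset of $\SE{3}{2}$, hence a Lie subgroup by the closed-subgroup theorem. A short calculation identifies its Lie algebra with the kernel at the identity of the differential of $\eta \mapsto \eta d_k$, which in the coordinates on $\se{3}{2}$ used throughout the paper is precisely the matrix $H_k$ displayed in \eqref{Jac:lie}; thus the Lie algebra of $\mathcal{H}_0$ equals $\ker H_k$. Since the group exponential maps any Lie subalgebra into its associated Lie subgroup, $\exp(\xi) \in \mathcal{H}_0$, equivalently $\exp(\xi) d_k = d_k$, for every $\xi \in \ker H_k$, which is what was needed.

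A reader preferring a direct verification may instead expand $\exp(\xi) d_k$ via the closed form of the $\SE{3}{2}$ exponential: the last two entries are trivially $\alpha_k, \beta_k$, while the first three involve the left Jacobian $J(\xi^R)$ of $\SO{3}$; combining the identity $\exp(\Wedge{\xi^R}) - I = J(\xi^R)\Wedge{\xi^R}$ with the linear relation extracted from $H_k \xi = 0$ and the antisymmetry of the cross product makes the expression collapse to $r_k$. The main conceptual point to justify is precisely why the linear condition $H_k\xi = 0$ implies the nonlinear equality $\exp(\xi) d_k = d_k$; the stabilizer-subgroup viewpoint makes this transparent by exhibiting $\mathcal{H}$ as a left coset of the Lie subgroup $\mathcal{H}_0$, which is the structural reason the invariant framework is consistent with this class of noise-free constraints.
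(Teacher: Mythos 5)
Your proof is correct and its overall architecture matches the paper's exactly: both reduce the claim to (a) the pointwise implication $\bigl(\chi\in\mathcal{H}\text{ and }H_k\xi=0\bigr)\Rightarrow\chi\exp(\xi)\in\mathcal{H}$ and (b) the fact that $H_kP_{k\given k}=0$ (Theorem~\ref{thm22}) forces $H_k\xi=0$ almost surely for $\xi\sim\mathcal N(0,P_{k\given k})$. The only genuine difference is how step (a) is handled. The paper states it as Proposition~\ref{proposition1} and outsources the proof to a series expansion of the exponential in \cite{chauchat2017kalman,barrau2019extended}; you instead give a self-contained structural argument, exhibiting $\mathcal{H}$ as a left coset of the stabilizer subgroup $\mathcal{H}_0=\{\eta:\eta d_k=d_k\}$ and identifying $\ker H_k$ with its Lie algebra via $\Wedge{\xi^R}r_k=-\Wedge{r_k}\xi^R$. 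That viewpoint is correct and arguably more illuminating, since it explains \emph{why} left-invariant outputs are compatible with the concentrated-Gaussian belief. Two small remarks: the closed-subgroup theorem is heavier machinery than needed here, because once the Lie-algebra element $Z$ associated with $\xi$ satisfies $Zd_k=0$ one gets $\exp(Z)d_k=\sum_{n\ge0}Z^nd_k/n!=d_k$ by immediate truncation of the series (this is essentially the expansion the paper alludes to, and it also makes your second, Jacobian-based verification unnecessary); and your probabilistic phrasing ($H_k\xi$ has zero mean and zero covariance, hence vanishes a.s.) is just a restatement of the paper's ``any $\xi$ in the span of $P_{k\given k}$ satisfies $H_k\xi=0$,'' so no gap there.
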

\begin{proof}
The result is a direct application of the following property from \cite{chauchat2017kalman,barrau2019extended}, which can be proved through a series expansion of the exponential, see  the proof of (iv) of Theorem 1 in \cite{chauchat2017kalman}.
\begin{proposition}\label{proposition1}
    Consider an element $\zeta$ of the Lie algebra of $\SE{3}{2}$ identified with $\mathbb R^9$. Let $H_k$ denote the Jacobian from invariant filtering associated to map $h(\chi):=\chi d_k$, i.e. given by \eqref{Jac:lie} in our problem. Then we have
 $
\left(\chi\in\mathcal{H}~\text{ and }~H_k\zeta=0\right)\Rightarrow \chi\exp\zeta\in \mathcal{H}.$
\end{proposition}

Now, the Riccati update with the limit gain \eqref{eq:limit;gain} ensures that $H_kP_{k \given k}=\zeros{m}{n}$, see Theorem \ref{thm22}. Thus any element $\xi$ in the span of $P_{k \given k}$ satisfies $H_k\xi=0$
 \end{proof}

We thus see that Property 2 is satisfied in a way that is meaningful and leads to a consistent belief: the  covariance matrix $P_{k \given k}$ being aligned with the actual dispersion, regardless of the linearization point. This is a clear improvement over the conventional EKF, that possesses none of the desired properties. In practice, this explains why the IEKF outperforms the EKF for the considered problem, as can be seen below. But before, let us turn to Property 1. 

\subsection{What about Property 1?}

In the previous theorem we had to assume Property~1 to be satisfied, that is, $\hat\chi_{k \given k}\in\mathcal{H}$ for Property~2 to make complete sense. Unfortunately, due to the nonlinear nature of the problem, the IEKF does not satisfy Property~1. All we know is that the residual innovation (in other terms the prediction error) is null up to second order terms after the update, that is,
\begin{align}
    z_{k \given k} &= \hat{\chi}_{k \given k}^{-1}y_k - d_k = \mathcal{O}\left( \| \xi_{k \given k} \| ^2\right),
\end{align}and as a result  $\hat{\chi}_{k \given k}$ may not belong to the expected set $\mathcal{H}$.

To address this issue, we now propose an alternative IEKF update procedure to reduce the norm of the residual innovation. The rationale is the following. Pseudo-measurements come for free, in the sense that they are not actual sensor measurements,  but side information that one has for certain. In principle, once a noise-free measurement is made, the drop in uncertainty is fully incorporated in the belief, so that making the same measurement immediately does not change the belief. In practice, we have seen this is not the case, owing to undesirable effects of nonlinearity. However, there is no reason one could not reuse this same information several times in a row. 

Our idea is thus to update the estimate with the same noise-free pseudo-measurement $y_k$ and the same Kalman gain $K_k$ as long as it makes the prediction error decrease, in other words cycle on the noise-free measurement until $ \| z_{k \given k} \| $ stabilizes. This procedure makes sense if we keep the gain $K_k$ constant when performing the same noise-free measurement several times, since as soon as $K_k$ is updated it does not correct the state along the measurement direction, owing to Property 2 (as $K_kz_k$ lies in the span of $P_{k \given k}$).  The corresponding procedure is detailed in Algorithm~\ref{algo:iter_upd}.

\begin{algorithm}
    \caption{Update procedure for the IEKF with noise-free pseudo-measurements.}\label{algo:iter_upd}
    \begin{algorithmic}[1]
    \State Compute $K_k$ using \eqref{eq:limit;gain}
    \State $z_k^0 \gets \hat{\chi}_{k \given k-1}^{-1}y_k - d_k$
    \State $\chi \gets \hat{\chi}_{k \given k-1} \exp(K_k z_k^0)$
    \State $z_k^1 \gets \chi^{-1}y_k - d_k$
    \State $i \gets 1$
    \While{$ \| z_k^i - z_k^{i-1} \|  > \mathrm{tol}$}
        \State $\chi \gets \chi\exp(K_k z_k^i)$
        \State $z_k^{i+1} \gets \chi^{-1}y_k - d_k$
        \State $i \gets i+1$
    \EndWhile
    \State $\hat{\chi}_{k \given k} \gets \chi$
    \State $P_{k \given k} \gets (I - K_kH_k)P_{k \given k-1}(I - K_kH_k)^T$
    \end{algorithmic}
\end{algorithm}

\begin{remark}Algorithm~\ref{algo:iter_upd} reduces the residual innovation but is not always able to completely eliminate it. Indeed, the innovation can be decomposed as
\begin{equation}
    z_k = z_{k,\parallel} + z_{k,\perp},
\end{equation}
where $z_{k,\parallel}\in \range{S_k}$ and $z_{k,\perp} \in \range{S_k}^\perp$, with $z_{k,\perp} = \mathcal{O}\left( \| \xi_{k \given k-1} \| ^2\right)$ being entirely due to linearization errors and $S_k=H_kP_{k \given k-1}H_k^T$. The component $z_{k,\perp}$ lies in the kernel of $K_k$, \ie, $K_kz_{k,\perp} = \zeros{n}{1}$. This means this part of the innovation accounts for a component $\xi_{k,\perp}$ of $\xi_{k \given k-1}$ that lives within $\range{P_{k \given k-1}}^\perp$ and cannot be corrected using Kalman filtering techniques. As a consequence, the updated state $\hat{\chi}_{k \given k}$ becomes slightly inconsistent with $P_{k \given k}$. In most applications, this inconsistency issue is immediately fixed at the propagation stage as process noise reintroduces uncertainty in the directions for which the last noise-free update negated the variance. In the absence of process noise, this is problematic as it makes the filter become quickly overconfident. This problem is left unsolved here and will be the subject of future work.
\end{remark}

\section{Estimation of the pendulum angle for a crane} 
\label{sec:simu}

We consider the problem of estimating the extended pose (orientation, position, velocity) of an IMU fixed on the hook of a crane as presented in \autoref{sec:mot_prob} and illustrated in \autoref{fig:crane}, and compare the three following  filters:
\begin{itemize}
    \item \textit{EKF}: a conventional extended Kalman filter.
    \item \textit{IEKF}: a conventional invariant extended Kalman filter.
    \item \textit{Noise-free IEKF}: the invariant extended Kalman filter that implements gain \eqref{eq:limit;gain} and uses  Algorithm~\ref{algo:iter_upd} to further refine its estimate at each pseudo-measurement.
\end{itemize}

For the conventional EKF and IEKF, the gain was computed by setting the noise covariance $N_k$ to the small value of $10^{-4}\identity{2}$ and letting $K_k = P_{k \given k-1}H_k^T(H_kP_{k \given k-1}H_k^T + N_k)^{-1}$.
Estimating the IMU pose in the 3D space with cable length as the unique constraint raises observability issues that might obscure the big picture of the present paper. For the sake of simplicity, the IMU is assumed to stay in the plane $\phi_k=0$ so that we consider the associated 2D problem, letting $\chi_k \in \SE{2}{2}$ and  $\xi_{k \given j}\in \mathbb{R}^{5}$. 

The ground-truth simulation is the trajectory followed by the hook when it starts with an initial angle $\theta_0 = 20^{\circ}$ and no angular velocity. No other external force than gravity is simulated. The crane cable is assumed to stay straight during the entire simulation and its length varies according to the profile displayed in \autoref{fig:length_cable}. The simulation is stopped after $2$~s.
\begin{figure}[thpb]
    \centering
    \includegraphics[width=.85\columnwidth]{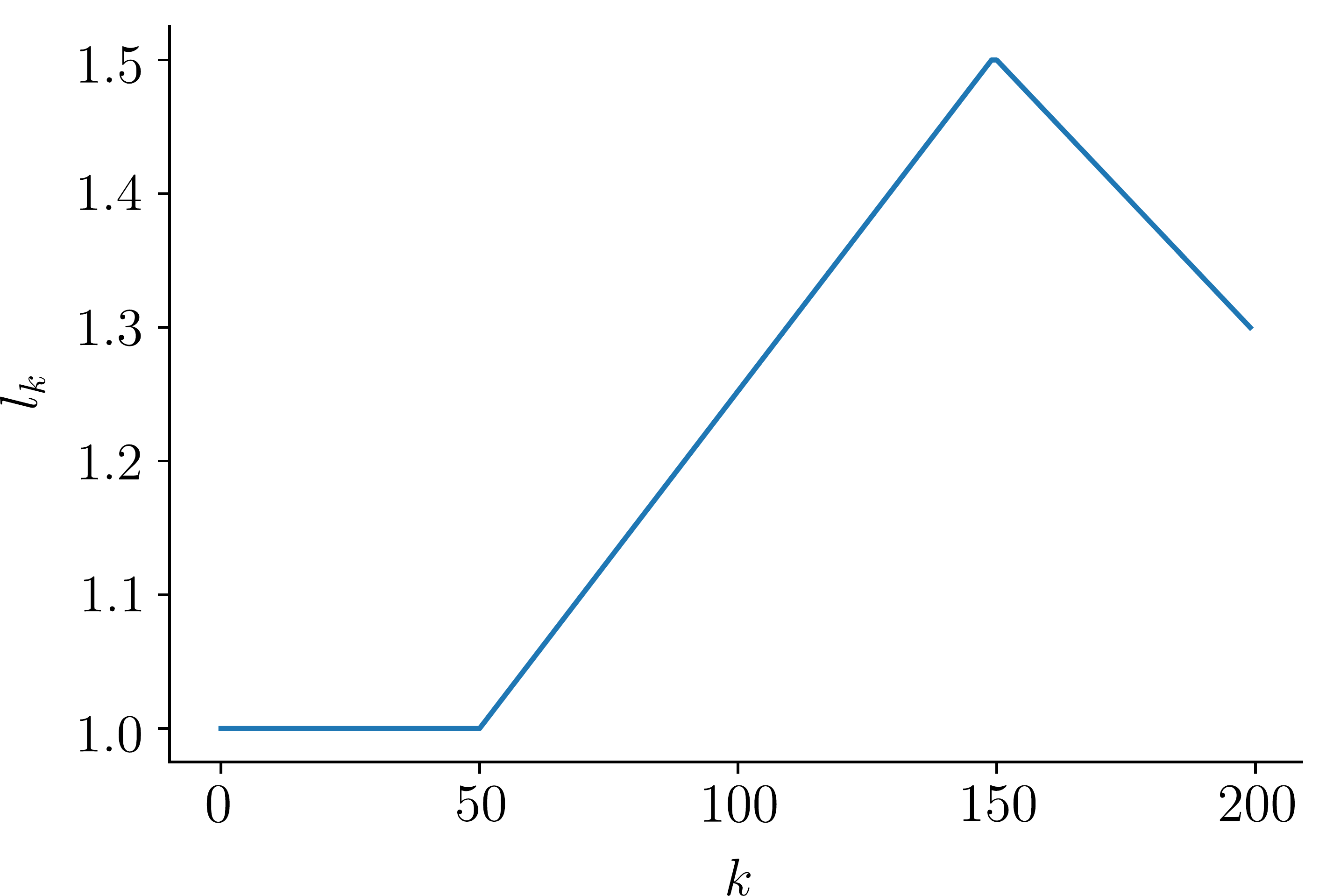}
    \caption{Evolution of the length of the crane cable $l_k$ as a function of the time index. }
    \label{fig:length_cable}
\end{figure}
The IMU and the three filters operate at the same frequency of $100$~Hz. The initial error covariance matrix is set to
\begin{equation}
    P_{0 \given 0} 
    = 
    \begin{bmatrix}
        0.05^2 & \zeros{1}{4}\\
        \zeros{4}{1} & 0.5^2 \identity{4}
    \end{bmatrix}.
\end{equation}
The gyroscope and accelerometer of the IMU are affected by Normal noise of zero mean and covariance
\begin{align}
    E\left( (w_k^{\omega})^2 \right) &= (0.005)^2,\\ 
    E\left( w_k^{a}(w_k^{a})^T\right) &= (0.005)^2\identity{2},
\end{align}
where $E(\cdot)$ denotes the expectation operator. The error  
\begin{equation}
    \xi_k = \log\left(\hat{\chi}_{k \given k}^{-1}\chi_k\right)
    \label{eq:nonlinear_error}
\end{equation}
is used to compare the performances of the three filters, where $\log(\cdot):\SE{2}{2} \rightarrow \mathbb{R}^5$ is the logarithmic map of $\SE{2}{2}$. The comparison is carried over $30$ simulations in which the initial error $\xi_0$ is drawn randomly from the distribution $\mathcal{N}(0_{5\times 1}, P_{0 \given 0})$. The average and standard deviations of the norm of the error $\xi_k$ are computed for the three filters. Results are plotted in \autoref{fig:results}.
\begin{figure}[thpb]
    \centering
    \includegraphics[width=\columnwidth]{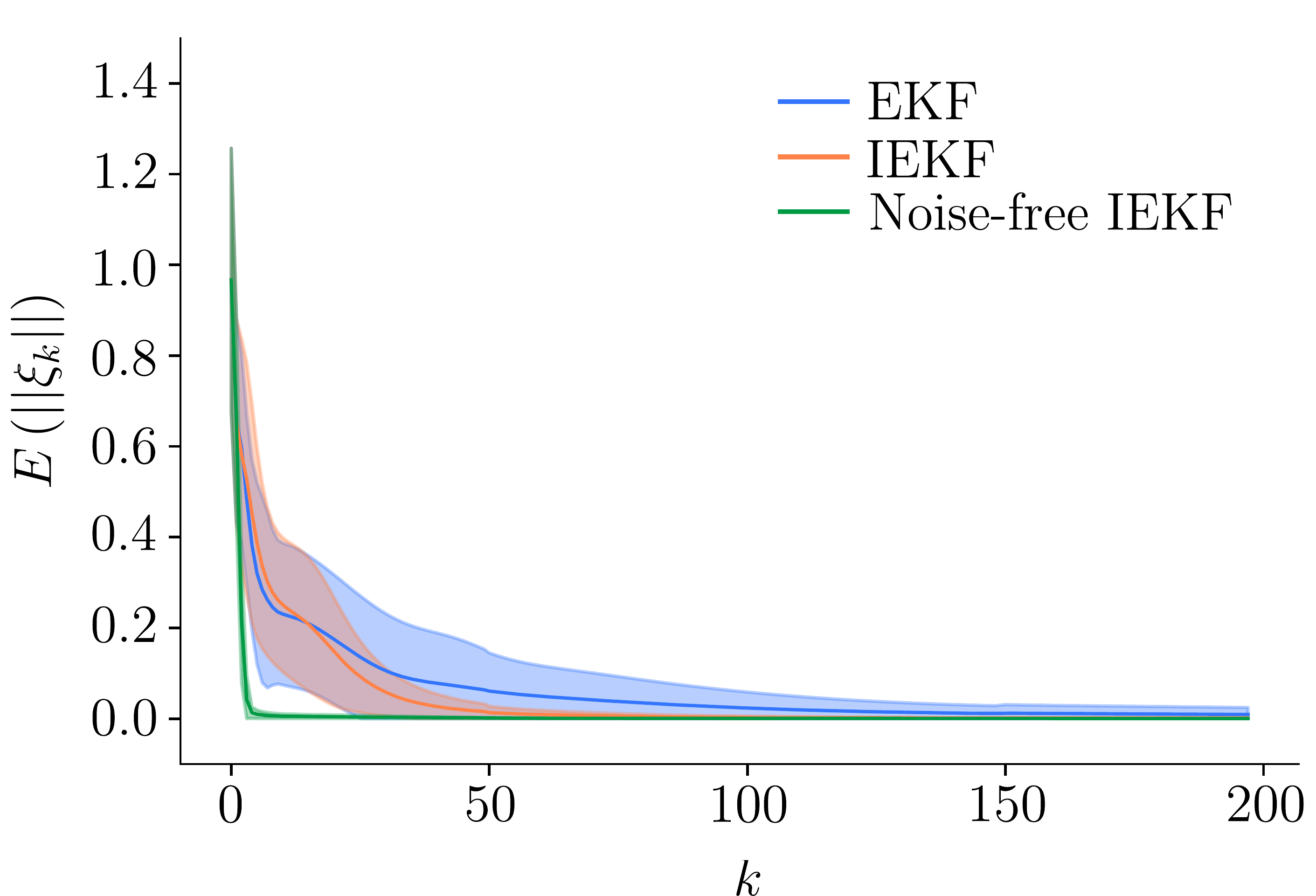}
    \caption{Mean norm of the error function $\xi_k$ as a function of the time index, computed over $30$ simulations. The standard deviation of the norm is displayed in light colors.}
    \label{fig:results}
\end{figure}
Two general comments can be made:
\begin{enumerate}
    \item The conventional extended Kalman filter is outperformed by its invariant filtering counterparts: its convergence is way slower, with a large variability across simulations.
    \item The noise-free IEKF exhibits the best convergence rate, with only $6$ time steps needed on average for $ \| \xi_k \| $ to go bellow $1\%$ of the initial error $ \| \xi_0 \| $, against $59$ and $191$ for the classical IEKF and EKF. It is also the method with the lowest variability from one simulation to another. Note that with a tolerance $\mathrm{tol} = 10^{-7}$, Algorithm~\ref{algo:iter_upd} performs $2.03$ cycles on average before $ \| z_{k \given k} \| $ stabilizes.
\end{enumerate}

This simulation shows that treating equality constraints like \eqref{eq:gen_cst} as proper noise-free pseudo-measurements clearly benefits the estimation and results in extremely fast convergence. 
\begin{remark}
    We indicate the failure to ensure Property~1 can sometimes make the noise-free IEKF diverge, when the residual innovation is too large. This leaves room for improvement, which is the subject of ongoing work.
\end{remark}


\section{Conclusion} \label{sec:conclu}
This work presented a practical and efficient solution for incorporating equality constraints of the form \eqref{eq:gen_cst} in the (extended) Kalman filtering framework. 
We advocated such constraints should be expressed as noise-free pseudo-measurements. 
We also derived a formulation of the Kalman gain that solves singular matrix inversion issues that may be encountered when dealing with noise-free pseudo-measurements. Using a proper matrix Lie group embedding and the theory of invariant filtering  \cite{barrau2018invariant} leads to Normal filters that correctly encode the actual physical uncertainty in the problem: their belief is consistent with the constraint in that the covariance matrix after update corresponds to an uncertainty that is wholly contained within the constrained subset. 
Finally, an alternative update for the IEKF was proposed in order to mitigate the impact of linearization errors in the noise-free update process. 
The performance of this method was evaluated on the task of estimating the extended pose of the hook of a crane, and was proved to outperform both the EKF and IEKF in this context.

As a perspective, we would like to thoroughly treat the case of the 3D crane, which poses some potential observability issues. We also intend to derive a general theory for noise-free measurements in the context of invariant filtering, namely for two-frame systems \cite{barrau2022geometry}. In this context, we would like to explore how to fully enforce the constraint $y_k=h(\hat{\chi}_{k \given k})$ at update in a natural way, that is, while retaining all the information that has been acquired before update. This might be achieved via the iterated Kalman filter \cite{bell1993iterated,bourmaud2016intrinsic}. To do iterated EKF in the present context, we anticipate the results developed at Theorem \ref{thm22} may prove useful. 


\appendices
\section{Kalman filter}\label{app:KF}
Let $x_k\in \mathbb{R}^n$ be the state to estimate. The discrete-time Kalman Filter (KF) is an optimal probabilistic filter that fuses information coming from two noisy sources: the state dynamics model
\begin{equation}
    x_{k+1} = F_k x_k + B_k u_k + w_k,
\end{equation}
with $F_k \in \mathbb{R}^{n\times n}$, $B_k\in \mathbb{R}^{n\times b}$, input $u_k \in \mathbb{R}^{b}$, and process noise $w_k \in \mathbb{R}^n$ with covariance $Q_k$, and measurements 
\begin{equation}
    y_k = H_k x_k + n_k
\end{equation}
with $H_k \in \mathbb{R}^{m\times n}$ and measurement noise $n_k\in \mathbb{R}^m$ with covariance $N_k$. The KF makes the assumption that $\xi_{k \given j} = x_k - \hat{x}_{k \given j} \sim \mathcal{N}(\zeros{n}{1}, P_{k \given j})$, where $\hat{x}$, $\xi$ and $P$ respectively denote the filter estimate, the estimation error and the associated covariance matrix. 
One iteration of the KF consists of two stages: the propagation and the update.
\paragraph{Propagation}
The KF propagates its belief through the state dynamics model as follows:
\begin{subequations}
\begin{align}
    \hat{x}_{k+1 \given k} &= F_k \hat{x}_{k \given k} + B_k u_k,\\
    P_{k+1 \given k} &= F_k P_{k \given k}F_k^T + Q_k.
\end{align}
\end{subequations}

\paragraph{Update}
Receiving measurement $y_k$, the KF updates its belief as follows:
\begin{subequations}
\begin{align}
    z_k &= y_k - H_k \hat{x}_{k \given k-1},\\
    S_k &= H_k P_{k \given k-1} H_k^T + N_k, \\
    K_k &= P_{k \given k-1}H_k^T S_k^{-1},\label{eq:Kalman:gain}\\
    \hat{x}_{k \given k} &= \hat{x}_{k \given k-1} + K_k z_k,\label{eq:stat:up}\\
    P_{k \given k} &= (\identity{n} - K_kH_k)P_{k \given k-1}\label{eq:ric:up} 
\end{align}
\end{subequations}
where $z_k$ is called the innovation, $S_k$ is the innovation covariance, and $K_k$ is the Kalman gain.


\bibliographystyle{IEEEtran}
\bibliography{root.bib}

\end{document}